\documentclass[reprint,pra,aps,longbibliography]{revtex4-2}
\usepackage[english]{babel}
\usepackage{amsmath}
\usepackage{amsthm}
\usepackage{graphicx}
\usepackage{comment}

\def\be{\begin{equation}}
\def\ee{\end{equation}}
\def\ber{\begin{eqnarray}}
\def\eer{\end{eqnarray}}
\def\bern{\begin{eqnarray*}}
\def\eern{\end{eqnarray*}}

\def\rv{\mathbf{r}} 

\def\pv{\mathbf{p}}

\def\Av{\mathbf{A}}
\def\Bv{\mathbf{B}}

\def\Fv{\mathbf{F}}

\def\Rv{\mathbf{R}} 

\def\Vv{\mathbf{V}}

\def\0v{\mathbf{0}}
\def\1v{\mathbf{1}}
\def\2v{\mathbf{2}}
\def\3v{\mathbf{3}}

\def\pa{\partial}

\DeclareMathAlphabet\mathbfcal{OMS}{cmsy}{b}{n}

\def\Rvm{{\underset{\text{\raisebox{3 pt}{=}}}{R}}}
\def\rvm{{\underset{\text{\raisebox{3 pt}{=}}}{r}}}

\newtheorem{theorem}{Theorem}
\newtheorem{lemma}{Lemma}

\begin{document}

\title{In defence of the Ehrenfest mean-field molecular dynamics}

\author{Vladimir~U.~Nazarov}
\affiliation{Fritz Haber Research Center of Molecular Dynamics, the Hebrew University of Jerusalem, Institute of Chemistry,  Israel}
\email{vladimir.nazarov@mail.huji.ac.il}

\begin{abstract}
Re-visiting  the problem of the coupled electronic--nuclear motion, we prove that the so called `mean-field' Ehrenfest picture is an exact theory, rather than an approximation, in the classical limit for nuclei.
By establishing its full equivalence to the much more involved
Exact Factorization approach to the problem (Abedi {\it et al.}, Europhys. Lett. {\bf 106}, 33001 (2014)), our conclusion
rehabilitates the mean-field Ehrenfest molecular dynamics as a simple,  effective, while formally exact scheme, amenable to the use within the framework of the time-dependent density functional theory.
\end{abstract}

\maketitle

Consider a system of $N=N_e+N_n$ particles, $N_e$ of which are electrons and $N_n$ are nuclei.
We are concerned with the force acting on a nucleus within the framework of the mixed quantum (for electrons) -- classical (for nuclei) dynamics. 
The routinely used approach, known as the mean-field Ehrenfest picture, suggests that  nuclei move under the action of the Coulomb force of the instantaneous distribution of the electron density, plus the Coulomb force of the point-charges of other nuclei. The quantum-mechanical motion of the electronic subsystem is governed by the many-body time-dependent (TD) Schr\"{o}dinger equation (TDSE),
where nuclei serve as a source of the  external Coulomb field.
Put in formulae, this scheme reads
\begin{equation}
\begin{split}
\Fv_i(t)
&=   -\nabla_{\Rv_i} \left[ -\int d\rv  \frac{Z_i n_e(\rv,t) }{|\Rv_i-\rv|} \right. \\
& \left. + \sum\limits_{1=j\ne i}^{N_n} \frac{Z_j Z_i}{|\Rv_i-\Rv^{cl}_j(t)|} \right]_{\Rv_i\to \Rv_i^{cl}(t)},
\end{split}
\label{Ehfff}
\end{equation}
where $\Fv_i(t)$ is the force on the $i$-th nucleus, $\Rv_j^{cl}(t)$ are classical trajectories of nuclei,
$Z_j$ are their charges, and $n_e(\rv,t)$ is the  particle-density of electrons (we use atomic units ($m_e=\hbar=|e|=1$) throughout). 
Within this scheme, the electronic TDSE reads
\begin{equation}
\left[i \pa/\pa t -\hat{H}_e(t)\right] \Psi_e(\rvm,t)=0,
\label{EhSE}
\end{equation}
where the  Hamiltonian is
\begin{equation}
\hat{H}_e(t) \! = \! -  \! \sum\limits_{k=1}^{N_e} \frac{1}{2}\nabla_k^2 \! + \! \! \sum\limits_{k>l=1}^{N_e} \frac{1}{|\rv_k-\rv_l|}  
 \! - \! \! \! \sum\limits_{k,j=1}^{N_e,N_n}  \frac{Z_j}{|\rv_k -\Rv_j(t)|},
 \label{EhH}
\end{equation}
$\Psi_e(\rvm,t)$ is the electronic wave-function, and by $\rvm$ and $\Rvm$ we will denote the sets of the electronic and nuclear coordinates, respectively. 
The electron density, needed in Eq.~\eqref{Ehfff} in order to determine the force, is found from the solution of Eq.~\eqref{EhSE} as
\begin{equation}
n_e(\rv,t)= \langle \Psi_e(t)|\hat{n}_e(\rv)|\Psi_e(t)\rangle_\rvm,
\end{equation}
where
\begin{equation}
\hat{n}_e(\rv)=\sum\limits_{k=1}^{N_e} \delta(\rv-\rv_k)
\end{equation}
is the particle-density operator, and $\langle \dots \rangle_\rvm$ stands for the integration over electronic coordinates.

The mean-field Ehrenfest picture is generally considered to be an approximation, not only due to the classical treatment of nuclei, but in the classical limit itself. Advanced methods of the mixed quantum--classical description, designed to improve on the mean-field Ehrenfest scheme, have been developed 
\cite{Tully-90,Tully-98,Ben-Nun-00,Doltsinis-02,Bonella-05,Curchod-13}. 
Among them,  stands out the method based on the Exact Factorization (EF) formalism \cite{Abedi-10,Abedi-12}, which carries out the classical limit procedure without additional assumptions \cite{Abedi-14,Agostini-14}.

In view of the large amount of work having dealt with this problem over years, it comes out as an utter surprise that, within the mixed quantum--classical description, the mean-field Eherenfest dynamics is an exact theory. Nevertheless, this is, indeed, the case, as we show in this paper.

We emphasize that our subject is the strict classical limit for heavy particles. 
Therefore, such effects as trajectories' bifurcation, which retain, at least partly, quantum mechanical features on the side of nuclei, are outside the scope of this study.

{\it I. Ehrenfest force.}--
In this part it is convenient to use the continuous numbering of particles, without explicit separation into electrons and nuclei.
We write down the Hamiltonian of the system
\begin{equation}
\hat{H}= -  \sum\limits_{i=1}^N \frac{1}{2 m_i}\nabla_i^2+ \sum\limits_{i>j=1}^N \frac{z_i z_j}{|\rv_i-\rv_j|}.
\label{H}
\end{equation}
where $m_i$ and $z_i$ are the mass and charge of the $i$-th particle.

\begin{theorem}
Equation \eqref{Ehfff} for the force is exact in the mixed quantum--classical representation.
\label{Tforce}
\end{theorem}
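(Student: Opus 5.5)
I will start from the exact quantum dynamics generated by the full Hamiltonian \eqref{H} and use the Ehrenfest theorem for the momentum of the $i$-th particle.

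\textbf{Step 1: exact force on the mean momentum.} For the full wave-function $\Psi(\rv_1,\dots,\rv_N,t)$, the Heisenberg equation for $\hat{\pv}_i=-i\nabla_i$ gives
\begin{equation}
\frac{d}{dt}\langle \hat{\pv}_i\rangle = -\Big\langle \nabla_i \sum_{j\ne i} \frac{z_i z_j}{|\rv_i-\rv_j|}\Big\rangle .
\label{ehrexact}
\end{equation}
Here the kinetic term drops out because it commutes with $\hat{\pv}_i$. The right-hand side is a two-body expectation value. It can be written as
\begin{equation}
-\sum_{j\ne i}\int d\rv\, d\rv'\, \nabla_{\rv}\frac{z_iz_j}{|\rv-\rv'|}\, n_{ij}(\rv,\rv',t),
\end{equation}
where $n_{ij}$ is the pair density of particles $i$ and $j$. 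This relation is exact and contains no approximation.

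\textbf{Step 2: classical limit for nuclei.} I will define the mixed quantum--classical representation precisely as the limit $m_i\to\infty$ for nuclei, taken with $\langle\hat{\pv}_i\rangle/m_i$ and the widths fixed in the appropriate scaling. In this limit the nuclear marginal densities collapse to $\delta(\rv-\Rv_i^{cl}(t))$, with $\Rv_i^{cl}=\langle \hat{\rv}_i\rangle$. The key claim is that the pair densities then factorize:
\begin{itemize}
\item for nucleus--nucleus pairs, $n_{ij}\to\delta(\rv-\Rv_i^{cl})\,\delta(\rv'-\Rv_j^{cl})$;
\item for nucleus--electron pairs, $n_{ik}(\rv,\rv')\to\delta(\rv-\Rv_i^{cl})\, n_e^{(k)}(\rv',t)$.
\end{itemize}
Both follow because a joint distribution whose marginal in one variable is a delta function must be a product, since its conditional in that variable is sharp. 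Substituting into Step 1 and summing over electrons then reproduces exactly the two terms of Eq.~\eqref{Ehfff}: the electron-density term and the nuclear point-charge term. Finally, $m_i\,\ddot{\Rv}_i^{cl}=\frac{d}{dt}\langle\hat{\pv}_i\rangle$ identifies the result as the force $\Fv_i$.

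\textbf{Step 3: consistency of the electronic density.} I still need to show that the $n_e$ obtained as the marginal of the full state coincides with the one generated by Eqs.~\eqref{EhSE}--\eqref{EhH} with $\Rv_j(t)=\Rv^{cl}_j(t)$. This is not strictly part of the force statement, but it is needed for the theorem to mean what it says. I would argue it by conditioning the full TDSE on nuclear positions concentrated at $\Rvm^{cl}(t)$; a partial trace in the delta-localized limit should yield the electronic TDSE.

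\textbf{Main obstacle.} The hard part is the factorization in Step 2, specifically justifying that the nuclear wave-packets remain delta-like in the classical limit. I would first handle it using only the marginal-delta argument, which needs nothing more than the localization of nuclear positions, without assuming any product form of the wave-function. The subtlety is that localization must be uniform in time and must hold for the conditional electronic state. I would state it as the defining assumption of the strict classical limit, which the paper explicitly adopts when it excludes bifurcation. Bifurcation is exactly the situation where the marginal fails to be a single delta function.
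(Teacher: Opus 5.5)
Your Steps~1--2 are essentially the paper's proof: Ehrenfest's theorem for the nuclear momentum, then localization of the nuclei on $\Rv^{cl}_j(t)$, so that the two-body Coulomb expectation collapses to $-n_e$ plus nuclear point charges; your pair-density factorization is just a repackaging of the paper's substitution $\rv_1\to\Rv^{cl}_1(t)$ in the kernel, followed by localization of the remaining nuclei in $\rho^{(N-1)}$. Your Step~3 is not needed here, since the paper's proof of this theorem takes $n_e$ to be the limiting density of the full state; consistency with Eqs.~\eqref{EhSE}--\eqref{EhH} is established separately in Part~II through the exact-factorization Lemmas~\ref{L} and~\ref{LL}, which treat the $\nabla\chi/\chi$ coupling term (ill-defined in the $\delta$ limit) that your one-line partial-trace sketch glosses over.
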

\begin{proof}
Let us select a nucleus, which we will call the 1-st particle. The rate of its  momentum change can be written as \cite{Ehrenfest-27}
\begin{equation}
\begin{split}
\frac{d \pv_1(t)}{d t} &= \frac{d}{d t} \langle \Psi(t)|  \frac{1}{i} \nabla_1|\Psi(t)\rangle\\
&=\langle \Psi(t)|  \left[\hat{H},\nabla_1 \right]|\Psi(t)\rangle,
\end{split}
\end{equation}
where $\Psi(t)$ is the wave-function of the whole composite system. Then, with account of Eq.~\eqref{H}, we can write
\begin{equation}
\begin{split}
\frac{d \pv_1(t)}{d t} \! \!& = \! -\! \sum\limits_{i=2}^N \! \int \! \! d\rv_1 \dots d\rv_N |\Psi(\rv_1,\dots,\rv_N,t)|^2 
\nabla_1 \frac{z_1 z_i}{|\rv_i-\rv_1|} \\
&= \! \sum\limits_{i=2}^N \! \int \! d\rv_1 \dots d\rv_N |\Psi(\rv_1,\dots,\rv_N,t)|^2 
\nabla_i \frac{z_1 z_i}{|\rv_i-\rv_1|} .
\end{split}
\label{F0}
\end{equation}

We now take the classical limit with respect to the 1-st particle, assuming that it is localized in the infinitesimal vicinity to a classical trajectory $\Rv^{cl}_1(t)$.
Then, $|\Psi(\rv_1,\dots,\rv_N,t)|^2$ is zero unless $\rv_1$ coincides with $\Rv^{cl}_1(t)$, which allows us to make the substitution $\rv_1 \to \Rv^{cl}_1(t)$ in the denominator of Eq.~\eqref{F0}
\begin{equation}
\begin{split}
\frac{d \pv_1(t)}{d t} 
 &=   \sum\limits_{i=2}^N \! \int \! d\rv_1 \dots d\rv_N |\Psi(\rv_1,\dots,\rv_N,t)|^2 \\
&\times \nabla_i \frac{z_1 z_i}{|\rv_i-\Rv^{cl}_1(t)|}.
\end{split}
\label{F1}
\end{equation}
The latter equation can be rewritten as
\begin{equation}
\begin{split}
\frac{d \pv_1(t)}{d t} 
& =  \! \sum\limits_{i=2}^N \! \int \! d\rv d\rv_1 \dots d\rv_N z_i \delta(\rv_i-\rv) \\
&\times |\Psi(\rv_1,\dots,\rv_N,t)|^2 
\nabla \frac{z_1}{|\rv-\Rv_1(t)|}.
\end{split}
\label{F2}
\end{equation}
Noting that 
\begin{equation}
\rho^{(N-1)}(\rv,t) \! = \!\sum\limits_{i=2}^N \! \int \! \! d\rv_1 \dots d\rv_N  z_i \delta(\rv_i-\rv) |\Psi(\rv_1,\dots,\rv_N,t)|^2
\end{equation}
is the charge-density of all the particles but the 1-st one,
we have
\begin{equation}
\begin{split}
\frac{d \pv_1(t)}{d t} =   \int d\rv \rho^{(N-1)}(\rv,t)
\nabla \frac{z_1}{|\rv-\Rv_1(t)|}.
\end{split}
\end{equation}

Finally,  taking the classical limit with respect to the remaining $N_n-1$ nuclei, which yields for the charge-density
(we number nuclei first)
\begin{equation}
\rho^{(N-1)}(\rv,t)=-n_e(\rv,t)+\sum\limits_{i=2}^{N_n} z_i \delta[\rv-\Rv^{cl}_i(t)],
\end{equation}
 we arrive at the mean-field Ehrenfest force formula \eqref{Ehfff}.
\end{proof}

A natural question arises: How can Eq.~\eqref{Ehfff} hold true if a much more involved formula for the force,
containing the Berry-connection vector potential and the Berry-curvature magnetic field, 
is known to follow from the EF approach \cite{Abedi-14,Agostini-14}?
In part III we answer this question by re-deriving Eq.~\eqref{Ehfff} directly from the EF result,  demonstrating explicitly  the strict equivalence of the two methods.


{\it II. Electronic equation of motion.}--
The wave-function $\Psi(\Rvm,\rvm,t)$ of the composite electronic--nuclear system  satisfies TDSE
\begin{equation}
\left( i \pa/\pa t -\hat{H} \right) \Psi(\Rvm,\rvm,t)=0,
\label{SE}
\end{equation}
where $\hat{H}$ is the many-body Hamiltonian
\begin{equation}
\hat{H}=  -  \sum\limits_{k=1}^{N_n} \frac{1}{2 M_k}\nabla_{\Rv_k}^2 +\hat{H}^{BO} ,
\label{Hpart}
\end{equation}
and


\begin{equation}
\begin{split}
\hat{H}^{BO} &= -  \! \sum\limits_{k=1}^{N_e} \frac{1}{2}\nabla_{\rv_k}^2 \! + \! \sum\limits_{k>l=1}^{N_e} \frac{1}{|\rv_k-\rv_l|}  \\
&+ \! \sum\limits_{k>l=1}^{N_n} \frac{Z_k Z_l}{|\Rv_k-\Rv_l|} \!
 -\sum\limits_{k,j=1}^{N_e, N_n}  \frac{Z_j}{|\Rv_j-\rv_k|}.
\end{split}
\end{equation}

In the following, we use the machinery of EF formalism \cite{Abedi-10}.  We represent the wave-function as
\begin{equation}
\Psi(\Rvm,\rvm,t)=\chi(\Rvm,t) \Phi_\Rvm(\rvm,t),
\label{EF}
\end{equation}
and impose the partial normalization condition 
\begin{equation}
\langle \Phi_\Rvm|\Phi_\Rvm\rangle_\rvm=1.
\label{norm}
\end{equation}

The variationally best $\chi(\Rvm,t)$, at a given  $\Phi_\Rvm(\rvm,t)$, obeys the equation of motion \cite{Abedi-10}
\begin{equation}
i \dot{\chi}(\Rvm,t) \! = \! \!
\sum\limits_{i=1}^{N_n}\!  \frac{1}{2 M_i} \! \! \left( \! -i \nabla_i \! + \! \Av_i(\Rvm,t) \right)^{\! 2} \! \! \! \chi(\Rvm,t) +\epsilon(\Rvm,t) \chi(\Rvm,t),
\label{chieq}
\end{equation}
where the Berry-connection vector potential is
\begin{equation}
\Av_i(\Rvm,t)=-i \langle \Phi_\Rvm|\nabla_i|\Phi_\Rvm\rangle_\rvm,
\label{Acal}
\end{equation}
and the scalar potential is
\begin{equation}
\begin{split}
\epsilon(\Rvm,t) &=\langle \Phi_\Rvm(\rvm,t)|\hat{H}^{BO}-i \pa/\pa t|\Phi_\Rvm(\rvm,t)\rangle_\rvm \\
&+\sum\limits_{i=1}^{N_n} \left[ \frac{1}{2 M_i} \langle \nabla_i\Phi_\Rvm(\rvm,t)|\nabla_i \Phi_\Rvm(\rvm,t)\rangle_\rvm 
-\frac{\Av^2_i(\Rvm,t)}{2 M_i} \right].
\end{split}
\label{epseq}
\end{equation}

\newpage

We will need
\begin{lemma}
For the conditional electronic  wave-function  $\Phi_\Rvm(\rvm,t)$ and the corresponding nuclear wave-function $\chi(\Rvm,t)$,
the equality holds
\label{L}
\begin{equation}
\begin{split}
&|\chi(\Rvm,t)|^2 \! \left\{ \left[  i \pa/\pa t \! - \! \hat{H}^{BO} \! - \! \langle \Phi_\Rvm| i \pa/\pa t \! - \! \hat{H}^{BO}|\Phi_\Rvm\rangle_\rvm \right] \! \! \Phi_\Rvm(\rvm,t) \right.
 \\
&\left. + \sum\limits_{i=1}^{N_n} \frac{1}{2 M_i} [ \nabla^2_i -\langle \Phi_\Rvm|\nabla^2_i|\Phi_\Rvm\rangle_\rvm] \Phi_\Rvm(\rvm,t)\right\} \\
&+ \sum\limits_{i=1}^{N_n} \frac{i}{ M_i} \chi^*(\Rvm,t) \nabla_i \chi(\Rvm,t) \cdot [-i\nabla_i- \Av_i(\Rvm,t)] \Phi_\Rvm(\rvm,t) \! = \! 0. 
\end{split}
\label{maineq}
\end{equation}
\end{lemma}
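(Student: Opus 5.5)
Substitute the factorization \eqref{EF} into the full TDSE \eqref{SE} and then use the nuclear equation of motion \eqref{chieq} to eliminate $\dot\chi$. Multiplying through by $\chi^*$ turns the result into \eqref{maineq}.

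\emph{Expanding the TDSE.} Inserting $\Psi=\chi\Phi_\Rvm$ and using the product rule
\[
\nabla_i^2(\chi\Phi_\Rvm)=(\nabla_i^2\chi)\Phi_\Rvm+2\nabla_i\chi\cdot\nabla_i\Phi_\Rvm+\chi\nabla_i^2\Phi_\Rvm
\]
gives
\[
i\dot\chi\,\Phi_\Rvm+\chi\,(i\pa_t-\hat H^{BO})\Phi_\Rvm+\sum_i\frac{1}{2M_i}\left[(\nabla_i^2\chi)\Phi_\Rvm+2\nabla_i\chi\cdot\nabla_i\Phi_\Rvm+\chi\nabla_i^2\Phi_\Rvm\right]=0 .
\]
Taking $\langle\Phi_\Rvm|\cdot\rangle_\rvm$ of this identity and using \eqref{norm} together with \eqref{Acal} (so that $\langle\Phi_\Rvm|\nabla_i\Phi_\Rvm\rangle_\rvm=i\Av_i$) yields the scalar equation
\[
i\dot\chi+\chi\langle\Phi_\Rvm|i\pa_t-\hat H^{BO}|\Phi_\Rvm\rangle_\rvm+\sum_i\frac{1}{2M_i}\left[\nabla_i^2\chi+2i\Av_i\cdot\nabla_i\chi+\chi\langle\Phi_\Rvm|\nabla_i^2|\Phi_\Rvm\rangle_\rvm\right]=0 .
\]
This is equivalent to \eqref{chieq}–\eqref{epseq}. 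I will check that explicitly by expanding $(-i\nabla_i+\Av_i)^2\chi$, using $\langle\nabla_i\Phi_\Rvm|\nabla_i\Phi_\Rvm\rangle_\rvm=-\langle\Phi_\Rvm|\nabla_i^2\Phi_\Rvm\rangle_\rvm$ up to total-derivative terms, and using $\nabla_i\cdot\Av_i$ terms coming from differentiating \eqref{norm}. The real-valuedness of $\Av_i$ and the identity $2\,\mathrm{Re}\langle\Phi_\Rvm|\nabla_i\Phi_\Rvm\rangle_\rvm=0$ will be needed there. This bookkeeping is the main obstacle: the Laplacian, the $\Av^2$ term and the $\nabla\cdot\Av$ term must cancel exactly, and I expect the cancellation to work precisely because $\langle\Phi_\Rvm|\nabla_i^2\Phi_\Rvm\rangle_\rvm=i\nabla_i\cdot\Av_i-\langle\nabla_i\Phi_\Rvm|\nabla_i\Phi_\Rvm\rangle_\rvm$.

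\emph{Subtracting.} Multiply the scalar equation by $\Phi_\Rvm$ and subtract it from the unprojected equation. The $i\dot\chi$ and $\nabla_i^2\chi$ terms drop out, leaving
\[
\chi\Big[(i\pa_t-\hat H^{BO})-\langle i\pa_t-\hat H^{BO}\rangle\Big]\Phi_\Rvm+\sum_i\frac{\chi}{2M_i}\Big[\nabla_i^2-\langle\nabla_i^2\rangle\Big]\Phi_\Rvm+\sum_i\frac{1}{M_i}\nabla_i\chi\cdot(\nabla_i-i\Av_i)\Phi_\Rvm=0 .
\]
Finally, multiply by $\chi^*$ and note that $\nabla_i-i\Av_i=i(-i\nabla_i-\Av_i)$. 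This reproduces exactly the last term of \eqref{maineq}, with prefactor $\frac{i}{M_i}\chi^*\nabla_i\chi$. Multiplying by $\chi^*$ also makes the statement valid at nodes of $\chi$, where $\Phi_\Rvm$ itself is ill-defined.
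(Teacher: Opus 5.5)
Your proof is correct and is essentially the paper's argument: expand the TDSE with $\Psi=\chi\Phi_\Rvm$, eliminate $i\dot\chi$ using the nuclear equation, apply the pointwise identity $\langle\Phi_\Rvm|\nabla_i^2\Phi_\Rvm\rangle_\rvm=i\nabla_i\cdot\Av_i-\langle\nabla_i\Phi_\Rvm|\nabla_i\Phi_\Rvm\rangle_\rvm$, and multiply by $\chi^*$. The one difference is that you use the nuclear equation in its projected form instead of substituting Eqs.~\eqref{chieq}--\eqref{epseq}, so the equivalence check with \eqref{chieq} that you call the main obstacle is not actually needed: the lemma already follows from the TDSE and \eqref{norm} alone.
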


Proof of Lemma \ref{L} is given in Appendix \ref{PL}. We note that, while  Eq.~\eqref{maineq} is fully equivalent to the equation of motion for $\Phi_\Rvm(\rvm,t)$ \cite{Abedi-10}, in its present form it is particularly convenient for our purposes.

\

We continue by considering the integral, which, by Lemma \ref{L}, is equal to zero

\begin{equation}
\begin{split}
& \int d \Rvm |\chi(\Rvm,t)|^2 \\
&\times \left\{  \left[  i \pa/\pa t -\hat{H}^{BO} -\langle \Phi_\Rvm| i \pa/\pa t -\hat{H}^{BO}|\Phi_\Rvm\rangle_\rvm \right] \Phi_\Rvm(\rvm,t) \right.
 \\
& + \sum\limits_{i=1}^{N_n} \frac{1}{2 M_i} [ \nabla^2_i -\langle \Phi_\Rvm|\nabla^2_i|\Phi_\Rvm\rangle_\rvm] \Phi_\Rvm(\rvm,t) \\
&\left. +  \sum\limits_{i=1}^{N_n} \frac{1}{ M_i}   \nabla_i S(\Rvm,t) \cdot [i\nabla_i+ \Av_i(\Rvm,t)] \Phi_\Rvm(\rvm,t)\right\}\\
&- \int d \Rvm  \sum\limits_{i=1}^{N_n} \frac{i}{2  M_i}   \nabla_i |\chi(\Rvm,t)|^2  \cdot [i\nabla_i+ \Av_i(\Rvm,t)] \Phi_\Rvm(\rvm,t)=0,
\end{split}
\label{olast}
\end{equation}

where we use the representation
\begin{equation}
\chi(\Rvm,t)=|\chi(\Rvm,t)| e^{i S(\Rvm,t)}.
\label{chiS}
\end{equation}
After  the integration by parts in the last term of Eq.~\eqref{olast} and simplifications, we can write
\begin{widetext}
\begin{equation}
\begin{split}
& \int d \Rvm |\chi(\Rvm,t)|^2 
\left\{  \left[  i \pa/\pa t -\hat{H}^{BO} -\langle \Phi_\Rvm| i \pa/\pa t -\hat{H}^{BO}|\Phi_\Rvm\rangle_\rvm \right] \Phi_\Rvm(\rvm,t) \right.
 \\
&\left. + \sum\limits_{i=1}^{N_n} \frac{1}{2 M_i}  \langle \nabla_i\Phi_\Rvm|\nabla_i\Phi_\Rvm\rangle_\rvm \Phi_\Rvm(\rvm,t) 
+ \sum\limits_{i=1}^{N_n} \frac{1}{ M_i}   \nabla_i S(\Rvm,t) \cdot [i\nabla_i+ \Av_i(\Rvm,t)] \Phi_\Rvm(\rvm,t)\right\}=0.
\end{split}
\label{MMM}
\end{equation}
\end{widetext}
Equation \eqref{MMM} is the general quantum mechanical result.
In the classical limit for nuclei, we  write \cite{Abedi-14,Agostini-14}
\begin{align}
&|\chi^{cl}(\Rvm,t)|^2=\delta[\Rvm-\Rvm^{cl}(t)],  \label{cl1}\\
&M_i \dot{\Rv}^{cl}_i(t)=\left.\nabla_i S^{cl}(\Rvm,t) +\Av_i^{cl}(\Rvm,t)\right|_{\Rvm\to \Rvm^{cl}(t)}, \label{cl2}
\end{align}
where $\Rvm^{cl}(t)$ are classical trajectories of nuclei, and all the quantities with the `$cl$` superscript are those concerted with the limit \eqref{cl1}-\eqref{cl2}.
In particular, $\Phi_\Rvm^{cl}(\rvm,t)$ denotes the conditional electronic wave-function after the classical limit for nuclei is taken, but yet before the substitution $\Rvm\to\Rvm^{cl}(t)$
is made. 
We will need the following
\begin{lemma}
\label{LL}
\begin{equation}
\left. i\nabla_i \Phi_\Rvm^{cl}(\rvm,t) + \Av^{cl}_i(\Rvm,t) \Phi_\Rvm^{cl}(\rvm,t)\right|_{\Rvm\to \Rvm^{cl}(t)} = \0v.
\end{equation}
\end{lemma}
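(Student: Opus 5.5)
We need to show that in the classical limit, $(i\nabla_i + \Av_i)\Phi$ vanishes at the classical trajectory. The plan is to work from Lemma \ref{L} itself, i.e.\ from the exact equation of motion for $\Phi_\Rvm$, rather than from the integrated form \eqref{MMM}. The idea is that in the classical limit $|\chi^{cl}|^2$ becomes a delta function while $\chi^{*}\nabla_i\chi$ contains both a phase part and a derivative of the delta function. Matching the most singular terms should force the vector $[-i\nabla_i-\Av_i]\Phi$ to vanish on the trajectory.

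First, I would insert \eqref{chiS} into Eq.~\eqref{maineq}. This gives
\begin{equation}
\chi^*\nabla_i\chi=\tfrac12\nabla_i|\chi|^2+i|\chi|^2\nabla_i S .
\end{equation}
Eq.~\eqref{maineq} then splits into a term proportional to $|\chi|^2$, which collects the braces and the $\nabla_i S$ piece, plus the term
\begin{equation}
\sum_i \frac{i}{2M_i}\nabla_i|\chi|^2\cdot[-i\nabla_i-\Av_i]\Phi_\Rvm .
\end{equation}
Next I take the limit \eqref{cl1}, so that $|\chi^{cl}|^2=\delta[\Rvm-\Rvm^{cl}(t)]$ and $\nabla_i|\chi^{cl}|^2=\nabla_i\delta[\Rvm-\Rvm^{cl}(t)]$. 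The identity holds pointwise in $\Rvm$ as a distribution. The delta function and its gradients are linearly independent distributions at $\Rvm^{cl}$, so the coefficients must vanish separately after localization. To make this precise, I would multiply the identity by a smooth test function $f(\Rvm)$ and integrate over $\Rvm$. Integrating by parts in the derivative-of-delta term produces terms
\begin{equation}
\frac{i}{2M_i}\nabla_i f\cdot[i\nabla_i+\Av^{cl}_i]\Phi^{cl}_\Rvm\Big|_{\Rvm^{cl}} .
\end{equation}
The remaining terms involve only $f(\Rvm^{cl})$. Choosing $f$ with $f(\Rvm^{cl})=0$ but arbitrary gradient $\nabla_i f(\Rvm^{cl})$ isolates each $i$ separately, and this yields the claim of Lemma~\ref{LL}.

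The main obstacle is the justification step. Other terms in the braces could carry hidden singularities that also couple to $\nabla f$ or to $f$ at the trajectory. For example, $\nabla_i^2\Phi$, or $\nabla_i S^{cl}$, which is finite by \eqref{cl2}, might scale with $M_i$. I would argue as follows. Along the lines of \cite{Abedi-14,Agostini-14}, $\Phi^{cl}_\Rvm$ and $S^{cl}$ are smooth in $\Rvm$ near the trajectory, and the factor $1/M_i$ multiplying $\nabla_i S$ combines with \eqref{cl2} into a finite velocity. Hence the only derivative-of-delta contribution is the one identified above. If this scaling argument proves delicate, my fallback is to use the gauge freedom of EF: normalizing $\Phi$ so that $\Av_i$ equals the phase gradient, and then reading off the statement as the stationarity of $\Phi$ along $\Rvm$ at the delta support.
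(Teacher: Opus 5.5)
Your proposal is essentially the paper's proof. The paper also substitutes $\chi^*\nabla_i\chi=\tfrac12\nabla_i|\chi|^2+i|\chi|^2\nabla_i S$ into Eq.~\eqref{maineq}. It then multiplies by the specific test function $\Rv_j-\Rv_j^{cl}(t)$, which is your $f$ with $f(\Rvm^{cl})=0$ and unit gradient in the $j$-th block, so that all $\delta$-weighted terms drop. Integrating by parts in the $\nabla_i|\chi^{cl}|^2$ term gives $\int d\Rvm\,|\chi^{cl}|^2[-i\nabla_j-\Av^{cl}_j]\Phi^{cl}_\Rvm=\mathbf{0}$. The regularity of the remaining coefficients at $\Rvm^{cl}(t)$, which you flag as the main obstacle, is assumed tacitly in the paper as well, so your gauge-fixing fallback is not needed.
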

Proof of the Lemma is given in Appendix \ref{PLL}.

\

After the substitution of Eq.~\eqref{cl1} in Eq.~\eqref{MMM} and using  Lemma \ref{LL}, we have
\begin{equation}
\begin{split}
&\left\{ \left[  i \pa/\pa t -\hat{H}^{BO} -\langle \Phi^{cl}_\Rvm| i \pa/\pa t -\hat{H}^{BO}|\Phi^{cl}_\Rvm\rangle_\rvm  \right. \right. \\
&\left. \left.  +   \sum\limits_{i=1}^{N_n} \frac{1}{2 M_i}  \langle \nabla_i\Phi^{cl}_\Rvm|\nabla_i \Phi^{cl}_\Rvm\rangle_\rvm \right]\Phi^{cl}_\Rvm(\rvm,t) 
\right\}_{\Rvm\to\Rvm^{cl}(t)}=0.
\label{TTT}
\end{split}
\end{equation}
We note that, by the rule for the full derivative,
\begin{equation}
\begin{split}
&\left[ \pa/\pa t  \Phi^{cl}_\Rvm(\rvm,t) \right]_{\Rvm\to\Rvm^{cl}(t)} = \pa/\pa t \Phi^{cl}_{\Rvm^{cl}(t)}(\rvm,t) \\
&- \sum\limits_{i=1}^{N_n}  \dot{\Rv}^{cl}_i(t)\cdot \left\{\nabla_i \Phi^{cl}_\Rvm(\rvm,t)\right\}_{\Rvm\to\Rvm^{cl}(t)},
\end{split}
\label{W4}
\end{equation}
which, with the use of Lemma~\ref{LL}, can be written as
\begin{equation}
\begin{split}
&\left[ i\pa/\pa t  \Phi^{cl}_\Rvm(\rvm,t) \right]_{\Rvm\to\Rvm^{cl}(t)} = i \pa/\pa t \Phi^{cl}_{\Rvm^{cl}(t)}(\rvm,t) \\
&+ \sum\limits_{i=1}^{N_n}  \dot{\Rv}^{cl}_i(t)\cdot  \Av_i^{cl}(\Rvm^{cl}(t),t)\Phi^{cl}_{\Rvm^{cl}(t)}(\rvm,t),
\end{split}
\end{equation}
where $\left. \Phi^{cl}_{\Rvm^{cl}(t)}(\rvm,t)=\Phi^{cl}_\Rvm(\rvm,t)\right|_{\Rvm\to\Rvm^{cl}(t)}$.
Finally, Eq.~\eqref{TTT} takes the form
\begin{equation}
\begin{split}
& \left[  i \pa/\pa t -\hat{H}^{BO}(t)  +C(t) \right] \Phi^{cl}_{\Rvm^{cl}(t)}(\rvm,t) =0,
\label{TTT2}
\end{split}
\end{equation}
where
\begin{equation}
\hat{H}^{BO}(t)= -  \! \sum\limits_{k=1}^{N_e} \frac{1}{2}\nabla_{\rv_k}^2 \! + \! \sum\limits_{k>l=1}^{N_e} \frac{1}{|\rv_k-\rv_l|} 
 -\sum\limits_{k=1}^{N_e} \sum\limits_{l=1}^{N_n} \frac{Z_l}{|\Rv_l(t)-\rv_k|},
\end{equation}
and 
\begin{equation}
\begin{split}
C(t)&=-\langle \Phi^{cl}_{\Rvm^{cl}(t)}| i \pa/\pa t -\hat{H}^{BO}|\Phi^{cl}_{\Rvm^{cl}(t)}\rangle_\rvm \\
 & +   \sum\limits_{i=1}^{N_n} \frac{1}{2 M_i}  \Av^{cl}_i(\Rvm^{cl}(t),t)^2  
\end{split}
\end{equation}
is a real time-dependent constant which, affecting the phase of $\Phi_{\Rvm^{cl}(t)}(\rvm,t)$ only, does not change the electron density $n_e(\rv,t)$, the latter necessary for the evaluation of forces by Eq.~\eqref{Ehfff}.

Importantly, since  in Eq.~\eqref{Ehfff} electrons are represented by their density $n_e(\rv,t)$ only, and  since in Eq.~\eqref{EhH} the  moving nuclei are present by the external classical Coulomb field they create, the direct use of the time-dependent density-functional theory (TDDFT) \cite{Runge-84,Gross-85} is fully justified.
While TDDFT is routinely used within the mean-field Ehrenfest dynamics (e.g, Refs.~\cite{Pruneda-07,Nazarov-14,Nazarov-26} are typical examples),  our finding shows that this does not involve additional approximations in the mixed quantum--classical description.

We, further, emphasize that our findings refer to mixed quantum-classical dynamics only, and they do not affect fully quantum mechanical approaches to the problem of the coupled electronic-nuclear motion \cite{Li-22}.

{\it III. Direct proof of the equivalence between EF and the mean-field Ehrenfest forces.}-- Although by Theorem \ref{Tforce} we have proven the exact validity of Eq.~\eqref{Ehfff}, it is instructive to explicitly show its strict equivalence to the corresponding EF result. The latter reads \cite{Abedi-14,Agostini-14}

\begin{equation}
\begin{split}
&\Fv_i(t)=-\nabla_i \epsilon^{cl}(\Rvm,t)+\frac{\pa \Av^{cl}_i(\Rvm,t)}{\pa t} \\
&\left. -\Vv^{cl}_i(t) \times \Bv_{i i}(\Rvm,t) + \sum\limits_{i'\ne i} \mathbfcal{F}_{i i'}(\Rvm,t) \right|_{\Rvm\to \Rvm^{cl}(t)}, \\
& \
\end{split}
\label{Fcal}
\end{equation}
where $\Vv^{cl}_i(t)= d \Rv^{cl}_i(t)/d t$ is the velocity of the $i$-th nucleus, 
\begin{equation}
\Bv_{i i'}(\Rvm,t)= \nabla_i \times \Av^{cl}_{i'}(\Rvm,t),
\end{equation}
and
\begin{equation}
\begin{split}
&\mathbfcal{F}_{i i'}(\Rvm,t)=-\Vv^{cl}_{i'} \times \Bv_{i i'}(\Rvm,t)\\
&+\left[ (\Vv^{cl}_{i'} \cdot \nabla_{i'}) \Av^{cl}_i(\Rvm,t)-(\Vv^{cl}_{i'} \cdot \nabla_{i}) \Av^{cl}_{i'}(\Rvm,t)  \right].
\end{split}
\end{equation}
We prove
\begin{lemma}
Let $X_{i\alpha}$ be the $\alpha$-th Cartesian coordinate of the $i$-th nucleus. Then
\begin{equation}
\left.\frac{\pa A^{cl}_{i\alpha}(\Rvm,t)}{\pa X_{j\beta}}-\frac{\pa A^{cl}_{j\beta}(\Rvm,t)}{\pa X_{i\alpha}}\right|_{\Rvm\to \Rvm^{cl}(t)} =0,
\label{36}
\end{equation}
\label{LLL}
\end{lemma}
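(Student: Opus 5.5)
We read the statement as the assertion that the Berry curvature $\partial A_{i\alpha}/\partial X_{j\beta}-\partial A_{j\beta}/\partial X_{i\alpha}$ vanishes on the classical trajectory. The plan is to write it in terms of $\Phi^{cl}_\Rvm$ and then use Lemma~\ref{LL}. From Eq.~\eqref{Acal},
\begin{equation*}
\frac{\pa A^{cl}_{i\alpha}}{\pa X_{j\beta}}-\frac{\pa A^{cl}_{j\beta}}{\pa X_{i\alpha}}
=-i\left[\langle \pa_{j\beta}\Phi^{cl}_\Rvm|\pa_{i\alpha}\Phi^{cl}_\Rvm\rangle_\rvm-\langle \pa_{i\alpha}\Phi^{cl}_\Rvm|\pa_{j\beta}\Phi^{cl}_\Rvm\rangle_\rvm\right],
\end{equation*}
where $\pa_{i\alpha}\equiv \pa/\pa X_{i\alpha}$. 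The terms containing second derivatives, $\langle\Phi|\pa_{j\beta}\pa_{i\alpha}\Phi\rangle$ and $\langle\Phi|\pa_{i\alpha}\pa_{j\beta}\Phi\rangle$, cancel because mixed partial derivatives commute. This is the standard expression $-2\,\mathrm{Im}$ of the antisymmetric overlap, and it involves only first derivatives of $\Phi^{cl}_\Rvm$. That is why evaluating at a point is enough, and no neighbourhood of the trajectory is needed.

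Next, we evaluate at $\Rvm\to\Rvm^{cl}(t)$ and substitute Lemma~\ref{LL} component-wise. The lemma gives
\begin{equation*}
\pa_{i\alpha}\Phi^{cl}_\Rvm=iA^{cl}_{i\alpha}\Phi^{cl}_\Rvm ,\qquad \pa_{j\beta}\Phi^{cl}_\Rvm=iA^{cl}_{j\beta}\Phi^{cl}_\Rvm .
\end{equation*}
Using the normalization \eqref{norm} and the reality of $\Av^{cl}$ (which follows from \eqref{norm} by differentiation), we get
\begin{equation*}
\langle \pa_{j\beta}\Phi|\pa_{i\alpha}\Phi\rangle_\rvm=(-iA_{j\beta})(iA_{i\alpha})=A_{j\beta}A_{i\alpha}.
\end{equation*}
This expression is symmetric under $(i\alpha)\leftrightarrow(j\beta)$, so the bracket vanishes and Eq.~\eqref{36} follows.

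The main obstacle is justification: Lemma~\ref{LL} holds only on the trajectory, not as an identity in $\Rvm$, so we cannot differentiate it. The argument above avoids this by first reducing the curl to first-derivative overlaps via the commutation of mixed partials, and only then evaluating. We would state explicitly that this requires $\Phi^{cl}_\Rvm$ to be twice differentiable in $\Rvm$ near $\Rvm^{cl}(t)$.
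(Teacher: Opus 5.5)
Your proof is correct and follows essentially the same argument as the paper. The paper applies the product rule to $\langle \pa_{j\beta}\Phi_\Rvm|\pa_{i\alpha}\Phi_\Rvm\rangle_\rvm$ in two ways and averages, which shows that the curvature equals twice the imaginary part of that overlap; it then uses Lemma~\ref{LL} to show that on the trajectory the overlap equals $A^{cl}_{i\alpha}A^{cl}_{j\beta}$, which is real. Your direct antisymmetrization is a slightly more compact way of writing the same identity, and your proof likewise relies on Lemma~\ref{LL}.
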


Proof of the Lemma is given in Appendix \ref{PLLL}.

\

For comparative simplicity, below we adhere to the  field-free case, although arbitrary electromagnetic fields can be included along the lines of Ref.~\cite{Nazarov-26-2}. 
Due to Lemma \ref{LLL}, the two first terms on RHS of Eq.~\eqref{Fcal} contribute only. 
For them, after lengthy but necessary algebra (see Appendix \ref{App}), we find 
\begin{widetext}
\begin{equation}
\begin{split}
  \left.-\nabla_i \epsilon^{cl}(\Rvm,t)+\frac{\pa \Av^{cl}_i(\Rvm,t)}{\pa t}\right|_{\Rvm\to \Rvm^{cl}(t)}  
&= \left.  - \langle \Phi^{cl}_\Rvm|[\nabla_i\hat{H}^{BO}]|  \Phi^{cl}_\Rvm\rangle_\rvm \right|_{\Rvm\to \Rvm^{cl}(t)} \\
&=  \left.\nabla_i \int d\rv  \frac{Z_i n_e^{cl}(\rv,t) }{|\Rv_i-\rv|} -\nabla_i \sum\limits_{1=j\ne i}^{N_n} \frac{Z_j Z_i}{|\Rv_j-\Rv_i|} \right|_{\Rvm\to \Rvm^{cl}(t)},
\end{split}
\label{Llast}
\end{equation}
\end{widetext}
which coincides with the Ehrenfest force of Eq.~\eqref{Ehfff}.

\

In conclusion, despite the generally accepted view that the mean-field Ehrenfest dynamics presents an approximate  solution to the problem of the mixed quantum-classical motion of interacting electrons and nuclei, we have proven that this solution is exact. Apart from dispelling a consequential misconception in theoretical many-body physics, 
our finding changes the status of
the Ehrenfest scheme to an efficient practical method in molecular dynamics, which is not compromised by the use of additional approximations.
In particular, the routinely practised  time-dependent density functional theory approach within the mean-field Ehrenfest dynamics 
is shown to be a conceptually exact  rather than approximate method.

\acknowledgements
We thank T. N. Todorov for valuable discussions

%

\newpage

\appendix

\begin{widetext}
\section{Proof of Lemma \ref{L}}
\label{PL}

We can write
\begin{equation}
\begin{split}
0 \! = \! \left(   i \pa/\pa t \! - \! \hat{H} \right)\chi(\Rvm,t) \Phi_\Rvm(\rvm,t) = i \dot{\chi}(\Rvm,t) \Phi_\Rvm(\rvm,t) \! + \! i \chi(\Rvm,t) \dot{\Phi}_\Rvm(\rvm,t) 
\! - \! \chi(\Rvm,t) \hat{H}^{BO} \Phi_\Rvm(\rvm,t)
\! + \! \sum\limits_{i=1}^{N_n} \frac{1}{2 M_i} \nabla^2_i [ \chi(\Rvm,t) \Phi_\Rvm(\rvm,t)],
\end{split}
\end{equation}
and, using Eqs.~\eqref{chieq}-\eqref{epseq}, 
\begin{equation}
\begin{split}
&
 -i \Phi_\Rvm(\rvm,t) \sum\limits_{i=1}^{N_n} \frac{1}{ 2 M_i} [\Av_i(\Rvm,t)\cdot \nabla_i +\nabla_i\cdot \Av_i(\Rvm,t)] \chi(\Rvm,t)   \\
& + \chi(\Rvm,t) \Phi_\Rvm(\rvm,t) \langle \Phi_\Rvm(\rvm,t)|\hat{H}^{BO}-i \pa/\pa t|\Phi_\Rvm(\rvm,t)\rangle_\rvm 
+ \chi(\Rvm,t) \Phi_\Rvm(\rvm,t) \sum\limits_{i=1}^{N_n}  \frac{1}{2 M_i} \langle \nabla_i\Phi_\Rvm(\rvm,t)|\nabla_i \Phi_\Rvm(\rvm,t)\rangle_\rvm
 \\
&+i \chi(\Rvm,t) \dot{\Phi}_\Rvm(\rvm,t)- \chi(\Rvm,t) \hat{H}^{BO} \Phi_\Rvm(\rvm,t) + \chi(\Rvm,t) \sum\limits_{i=1}^{N_n} \frac{1}{2 M_i} \nabla^2_i  \Phi_\Rvm(\rvm,t)
+ \sum\limits_{i=1}^{N_n} \frac{1}{ M_i} \nabla_i \chi(\Rvm,t) \cdot \nabla_i \Phi_\Rvm(\rvm,t)=0.
\end{split}
\end{equation}
\end{widetext}
The latter equation can be rewritten as Eq.~\eqref{maineq}.

\section{Proof of Lemma \ref{LL}}
\label{PLL}

In the classical limit, we multiply Eq.~\eqref{maineq} by $[\Rv_j-\Rv_j^{cl}(t)]$ and integrate over $\Rvm$. Then the only  non-zero term remaining is
\begin{equation}
\begin{split}
 \sum\limits_{i=1}^{N_n} \frac{i}{2  M_i} \int d\Rvm  [\Rv_j-\Rv_j^{cl}(t)] \nabla_i |\chi^{cl}(\Rvm,t)|^2 \\
 \cdot [-i\nabla_i- \Av^{cl}_i(\Rvm,t)] \Phi^{cl}_\Rvm(\rvm,t)=\0v,
\end{split}
\label{maineq2}
\end{equation}
or, after the integration by parts, 
\begin{equation}
\begin{split}
 \int d\Rvm   |\chi^{cl}(\Rvm,t)|^2  [-i\nabla_j- \Av_j^{cl}(\Rvm,t)] \Phi_\Rvm^{cl}(\rvm,t)=\0v.
\end{split}
\label{maineq3}
\end{equation}
With account of Eq.~\eqref{cl1}, Eq.~\eqref{maineq3}  proves the Lemma.

\section{Proof of Lemma \ref{LLL}}
\label{PLLL}

Consider the quantity $\left\langle \frac{\pa \Phi_\Rvm}{\pa X_{j\beta}}\right| \left. \frac{\pa \Phi_\Rvm}{\pa X_{i\alpha}}\right\rangle_\rvm$. On the one hand, we can write
\begin{equation}
\begin{split}
&\left\langle \frac{\pa \Phi_\Rvm}{\pa X_{j\beta}}\right| \left. \frac{\pa \Phi_\Rvm}{\pa X_{i\alpha}}\right\rangle_\rvm   = 
\frac{\pa }{\pa X_{j\beta}}\left \langle \Phi_\Rvm \right| \left.\frac{\pa \Phi_\Rvm}{\pa X_{i\alpha}}\right\rangle_\rvm \\
&-\left \langle \Phi_\Rvm \right| \left.\frac{\pa^2 \Phi_\Rvm}{\pa X_{i\alpha} \pa X_{j\beta}}\right\rangle_\rvm =
i \frac{\pa A_{i\alpha}(\Rvm,t)}{\pa X_{j\beta}} 
-\left \langle \Phi_\Rvm \right| \left.\frac{\pa^2 \Phi_\Rvm}{\pa X_{i\alpha} \pa X_{j\beta}}\right\rangle_{\! \! \!\rvm} 
\end{split}
\end{equation}
On the other
\begin{equation}
\begin{split}
&\left\langle \frac{\pa \Phi_\Rvm}{\pa X_{j\beta}}\right| \left. \frac{\pa \Phi_\Rvm}{\pa X_{i\alpha}}\right\rangle_\rvm = 
\frac{\pa }{\pa X_{i\alpha}}\left\langle \frac{\pa \Phi_\Rvm}{\pa X_{j\beta}} \right| \left. \Phi_\Rvm \right\rangle_\rvm \\
&-\left \langle \frac{\pa^2 \Phi_\Rvm}{\pa X_{i\alpha} \pa X_{j\beta}} \right| \left.  \Phi_\Rvm\right\rangle_\rvm \! = \!
-i \frac{\pa A_{j\beta}(\Rvm,t)}{\pa X_{i\alpha}} 
\! - \! \left \langle \frac{\pa^2 \Phi_\Rvm}{\pa X_{i\alpha} \pa X_{j\beta}} \right| \left.  \Phi_\Rvm\right\rangle_{\! \! \!\rvm }
\end{split}
\end{equation}
Adding these two equations and dividing by two, we have
\begin{equation}
\begin{split}
\left\langle \frac{\pa \Phi_\Rvm}{\pa X_{j\beta}}\right| \left. \frac{\pa \Phi_\Rvm}{\pa X_{i\alpha}}\right\rangle_\rvm &=
\frac{i}{2} \left[ \frac{\pa A_{i\alpha}(\Rvm,t)}{\pa X_{j\beta}}-\frac{\pa A_{j\beta}(\Rvm,t)}{\pa X_{i\alpha}}  \right] \\
&-\Re \left \langle \Phi_\Rvm \right| \left.\frac{\pa^2 \Phi_\Rvm}{\pa X_{i\alpha} \pa X_{j\beta}}\right\rangle_{\! \! \!\rvm}. \\
& \ \\
& \
\end{split}
\end{equation}

Hence, using Lemma~\ref{LL}, we can write
\begin{equation}
\begin{split}
&\left. A^{cl}_{i\alpha}(\Rvm,t) A^{cl}_{j\beta}(\Rvm,t)\right|_{\Rvm\to \Rvm^{cl}(t)} =\frac{i}{2} \times\\
&
\left[ \frac{\pa A^{cl}_{i\alpha}(\Rvm,t)}{\pa X_{j\beta}} \! - \! \frac{\pa A^{cl}_{j\beta}(\Rvm,t)}{\pa X_{i\alpha}}  \right]
\! \! - \! \! \left. \Re \left \langle \! \Phi^{cl}_\Rvm \right| \left.\frac{\pa^2 \Phi^{cl}_\Rvm}{\pa X_{i\alpha} \pa X_{j\beta}}\right\rangle_{\! \! \! \rvm }\right|_{\Rvm\to \Rvm^{cl}(t)}
\label{40}
\end{split}
\end{equation}
In Eq.~\eqref{40}, LHS and the second term on RHS are real, while the first term on RHS is imaginary, which makes the validity of Eq.~\eqref{36} evident.

\section{Proof of Eq.~\eqref{Llast}}

\label{App}

For the two remaining non-zero terms in Eq.~ \eqref{Fcal} we write,
according to Eqs.~\eqref{Acal} and \eqref{epseq},
\begin{equation}
\begin{split}
\frac{\pa \Av_i(\Rvm,t)}{\pa t} &=
-i \langle \Phi_\Rvm| \nabla_i|\dot{\Phi}_\Rvm\rangle_\rvm -i \langle \dot{\Phi}_\Rvm|\nabla_i|\Phi_\Rvm\rangle_\rvm \\
&=2 \Im \langle \Phi_\Rvm|\nabla_i |\dot{\Phi}_\Rvm\rangle_\rvm +i \nabla_i \langle \Phi_\Rvm| \dot{\Phi}_\Rvm\rangle_\rvm, 
\end{split}
\end{equation}

\begin{equation}
\begin{split}
&\nabla_i \epsilon(\Rvm,t)=\nabla_i \langle \Phi_\Rvm|\hat{H}^{BO}|\Phi_\Rvm\rangle_\rvm- i \nabla_i \langle \Phi_\Rvm|\dot{\Phi}_\Rvm\rangle_\rvm \\
&+\nabla_i \sum\limits_{j=1}^{N_n} \left[ \frac{1}{2 M_j} \langle \nabla_j\Phi_\Rvm|\nabla_j \Phi_\Rvm\rangle_\rvm 
-\frac{\Av^2_j(\Rvm,t)}{2 M_j} \right].
\end{split}
\end{equation}
Then

\vspace{0.5 cm}

\begin{widetext}

\begin{equation}
\begin{split}
&-\nabla_i \epsilon(\Rvm,t)+\frac{\pa \Av_i(\Rvm,t)}{\pa t} =- \langle \Phi_\Rvm|\hat{H}^{BO}|\nabla_i  \Phi_\Rvm\rangle_\rvm - \langle\nabla_i \Phi_\Rvm|\hat{H}^{BO}| \Phi_\Rvm\rangle_\rvm - \langle \Phi_\Rvm|[\nabla_i\hat{H}^{BO}]|  \Phi_\Rvm\rangle_\rvm\\
&-\nabla_i \sum\limits_{j=1}^{N_n} \left[ \frac{1}{2 M_j} \langle \nabla_j\Phi_\Rvm|\nabla_j \Phi_\Rvm\rangle_\rvm 
-\frac{\Av^2_j(\Rvm,t)}{2 M_j} \right] 
 +2 \Im  \langle \Phi_\Rvm|  \nabla_i \dot{\Phi}_\Rvm\rangle_\rvm -2 \Im \nabla_i \langle \Phi_\Rvm| \dot{\Phi}_\Rvm\rangle_\rvm .
\end{split}
\end{equation}

According to Eq.~\eqref{maineq}
\begin{equation}
\begin{split}
&
    \dot{\Phi}_\Rvm(\rvm,t)= -i\hat{H}^{BO}\Phi_\Rvm(\rvm,t) 
  +i \langle \Phi_\Rvm|  \hat{H}^{BO}|\Phi_\Rvm\rangle_\rvm \Phi_\Rvm(\rvm,t) 
  +\langle \Phi_\Rvm |\dot{\Phi}_\Rvm\rangle_\rvm \Phi_\Rvm(\rvm,t) 
 \\
&+  \sum\limits_{i=1}^{N_n} \frac{i}{2 M_i} [ \nabla^2_i\Phi_\Rvm(\rvm,t) -\langle \Phi_\Rvm|\nabla^2_i|\Phi_\Rvm\rangle_\rvm \Phi_\Rvm(\rvm,t)]  
+ \sum\limits_{i=1}^{N_n} \frac{1}{ M_i} \frac{\nabla_i \chi(\Rvm,t)}{\chi(\Rvm,t)} \cdot [i\nabla_i \Phi_\Rvm(\rvm,t)+ \Av_i(\Rvm,t) \Phi_\Rvm(\rvm,t)] ,
\end{split}
\end{equation}
and then
\begin{equation}
\begin{split}
&
    \nabla_i  \dot{\Phi}_\Rvm(\rvm,t)= -i [\nabla_i \hat{H}^{BO}] \Phi_\Rvm(\rvm,t)  -i \hat{H}^{BO} \nabla_i\Phi_\Rvm(\rvm,t)
 +i \langle \Phi_\Rvm|  \hat{H}^{BO}|\Phi_\Rvm\rangle_\rvm \nabla_i \Phi_\Rvm(\rvm,t) 
   +i \langle \nabla_i \Phi_\Rvm|  \hat{H}^{BO}|\Phi_\Rvm\rangle_\rvm  \Phi_\Rvm(\rvm,t)  \\
& 
    +i \langle \Phi_\Rvm|  \hat{H}^{BO}|\nabla_i \Phi_\Rvm\rangle_\rvm  \Phi_\Rvm(\rvm,t) 
    +i \langle \Phi_\Rvm| [\nabla_i \hat{H}^{BO}]|\Phi_\Rvm\rangle_\rvm  \Phi_\Rvm(\rvm,t) 
    +\langle \Phi_\Rvm |\dot{\Phi}_\Rvm\rangle_\rvm \nabla_i\Phi_\Rvm(\rvm,t) +\Phi_\Rvm(\rvm,t) \nabla_i \langle \Phi_\Rvm |\dot{\Phi}_\Rvm\rangle_\rvm  
 \\
&+  \sum\limits_{j=1}^{N_n} \frac{i}{2 M_j} [ \nabla_i \nabla^2_j\Phi_\Rvm(\rvm,t) -\langle \nabla_i \Phi_\Rvm|\nabla^2_j|\Phi_\Rvm\rangle_\rvm \Phi_\Rvm(\rvm,t)-\langle \Phi_\Rvm|\nabla_i \nabla^2_j|\Phi_\Rvm\rangle_\rvm \Phi_\Rvm(\rvm,t)-\langle \Phi_\Rvm|\nabla^2_j|\Phi_\Rvm\rangle_\rvm \nabla_i \Phi_\Rvm(\rvm,t)]  \\
&+ \! \sum\limits_{j=1}^{N_n} \! \frac{1}{ M_j} \! \nabla_i \! \! \left[\frac{\nabla_j \chi(\Rvm,t)}{\chi(\Rvm,t)} \right] \! \cdot \! [i\nabla_j \Phi_\Rvm(\rvm,t) \! + \! \Av_j(\Rvm,t) \Phi_\Rvm(\rvm,t)] \\
& +  \sum\limits_{j=1}^{N_n} \! \frac{1}{ M_j} \frac{\nabla_j \chi(\Rvm,t)}{\chi(\Rvm,t)} \! \cdot \! [i \nabla_i \nabla_j \Phi_\Rvm(\rvm,t) \! + \! \Phi_\Rvm(\rvm,t) \nabla_i \Av_j(\Rvm,t)  \!+\! \Av_j(\Rvm,t) \nabla_i \Phi_\Rvm(\rvm,t)]  .
\end{split}
\end{equation}
Therefore,
\begin{equation}
\begin{split}
&
    \langle \Phi_\Rvm|\nabla_i \dot{\Phi}_\Rvm\rangle_\rvm=   -i \langle \Phi_\Rvm |\hat{H}^{BO} \nabla_i\Phi_\Rvm\rangle _\rvm
  +i \langle \Phi_\Rvm|  \hat{H}^{BO}|\Phi_\Rvm\rangle_\rvm \langle \Phi_\Rvm|\nabla_i \Phi_\Rvm\rangle_\rvm
   +i \langle \nabla_i \Phi_\Rvm|  \hat{H}^{BO}|\Phi_\Rvm\rangle_\rvm 
    +i \langle \Phi_\Rvm|  \hat{H}^{BO}|\nabla_i \Phi_\Rvm\rangle_\rvm  
 \\
&+\langle \Phi_\Rvm |\dot{\Phi}_\Rvm\rangle_\rvm \langle \Phi_\Rvm|\nabla_i\Phi_\Rvm\rangle_\rvm + \nabla_i \langle \Phi_\Rvm |\dot{\Phi}_\Rvm\rangle_\rvm 
-  \sum\limits_{j=1}^{N_n} \frac{1}{2 M_j} [  i\langle \nabla_i \Phi_\Rvm|\nabla^2_j|\Phi_\Rvm\rangle_\rvm +i\langle \Phi_\Rvm|\nabla^2_j|\Phi_\Rvm\rangle_\rvm \langle \Phi_\Rvm|\nabla_i \Phi_\Rvm\rangle_\rvm] \\
&+ \sum\limits_{j=1}^{N_n} \frac{1}{ M_j} \frac{\nabla_j \chi(\Rvm,t)}{\chi(\Rvm,t)} \cdot [i \langle \Phi_\Rvm|\nabla_i \nabla_j |\Phi_\Rvm\rangle_\rvm \! + \!  \nabla_i \Av_j(\Rvm,t) +\Av_j(\Rvm,t) \langle \Phi_\Rvm|\nabla_i| \Phi_\Rvm\rangle_\rvm] ,
\end{split}
\end{equation}
or
\begin{equation}
\begin{split}
&
    \langle \Phi_\Rvm|\nabla_i \dot{\Phi}_\Rvm\rangle_\rvm= -i \langle \Phi_\Rvm |\hat{H}^{BO} \nabla_i\Phi_\Rvm\rangle_\rvm
  - \langle \Phi_\Rvm|  \hat{H}^{BO}|\Phi_\Rvm\rangle_\rvm \Av_i(\Rvm,t)
   +i \langle \nabla_i \Phi_\Rvm|  \hat{H}^{BO}|\Phi_\Rvm\rangle_\rvm 
    +i \langle \Phi_\Rvm|  \hat{H}^{BO}|\nabla_i \Phi_\Rvm\rangle_\rvm  
 \\
&+i \langle \Phi_\Rvm |\dot{\Phi}_\Rvm\rangle_\rvm \Av_i(\Rvm,t) + \nabla_i \langle \Phi_\Rvm |\dot{\Phi}_\Rvm\rangle_\rvm
-  \sum\limits_{j=1}^{N_n} \frac{1}{2 M_j} [  i\langle \nabla_i \Phi_\Rvm|\nabla^2_j|\Phi_\Rvm\rangle_\rvm -\langle \Phi_\Rvm|\nabla^2_j|\Phi_\Rvm\rangle_\rvm \Av_i(\Rvm,t)] \\
&+ \sum\limits_{j=1}^{N_n} \frac{1}{ M_j} \frac{\nabla_j \chi(\Rvm,t)}{\chi(\Rvm,t)} \cdot [i \langle \Phi_\Rvm|\nabla_i \nabla_j |\Phi_\Rvm\rangle_\rvm+  \nabla_i \Av_j(\Rvm,t) + i \Av_j(\Rvm,t) \Av_i(\Rvm,t)] ,
\end{split}
\end{equation}
or
\begin{equation}
\begin{split}
&
    \langle \Phi_\Rvm|\nabla_i \dot{\Phi}_\Rvm\rangle_\rvm=-i \langle \Phi_\Rvm |\hat{H}^{BO} \nabla_i\Phi_\Rvm\rangle_\rvm
  - \langle \Phi_\Rvm|  \hat{H}^{BO}|\Phi_\Rvm\rangle_\rvm \Av_i(\Rvm,t)
   +i \langle \nabla_i \Phi_\Rvm|  \hat{H}^{BO}|\Phi_\Rvm\rangle_\rvm 
    +i \langle \Phi_\Rvm|  \hat{H}^{BO}|\nabla_i \Phi_\Rvm\rangle_\rvm  
 \\
&+i \langle \Phi_\Rvm |\dot{\Phi}_\Rvm\rangle_\rvm \Av_i(\Rvm,t) + \nabla_i \langle \Phi_\Rvm |\dot{\Phi}_\Rvm\rangle_\rvm
-  \sum\limits_{j=1}^{N_n} \frac{1}{2 M_j} [  i\langle \nabla_i \Phi_\Rvm|\nabla^2_j|\Phi_\Rvm\rangle_\rvm - i \Av_i(\Rvm,t) \nabla_j\Av_j(\Rvm,t) 
+\Av_i(\Rvm,t) \langle \nabla_j \Phi_\Rvm|\nabla_j\Phi_\Rvm\rangle_\rvm ] \\
&+ \sum\limits_{j=1}^{N_n} \frac{1}{ M_j} \frac{\nabla_j \chi(\Rvm,t)}{\chi(\Rvm,t)} \cdot [i \langle \Phi_\Rvm|\nabla_i \nabla_j |\Phi_\Rvm\rangle_\rvm+  \nabla_i \Av_j(\Rvm,t) +i \Av_j(\Rvm,t) \Av_i(\Rvm,t)]  ,
\end{split}
\end{equation}
or
\begin{equation}
\begin{split}
&
    \langle \Phi_\Rvm|\nabla_i \dot{\Phi}_\Rvm\rangle_\rvm=-i \langle \Phi_\Rvm |\hat{H}^{BO} \nabla_i\Phi_\Rvm\rangle_\rvm
  - \langle \Phi_\Rvm|  \hat{H}^{BO}|\Phi_\Rvm\rangle_\rvm \Av_i(\Rvm,t)
   +i \langle \nabla_i \Phi_\Rvm|  \hat{H}^{BO}|\Phi_\Rvm\rangle_\rvm 
    +i \langle \Phi_\Rvm|  \hat{H}^{BO}|\nabla_i \Phi_\Rvm\rangle_\rvm  
 \\
&+i \langle \Phi_\Rvm |\dot{\Phi}_\Rvm\rangle_\rvm \Av_i(\Rvm,t) + \nabla_i \langle \Phi_\Rvm |\dot{\Phi}_\Rvm\rangle_\rvm
-  \sum\limits_{j=1}^{N_n} \frac{1}{2 M_j} [  i\langle \nabla_i \Phi_\Rvm|\nabla^2_j|\Phi_\Rvm\rangle_\rvm - i \Av_i(\Rvm,t) \nabla_j\Av_j(\Rvm,t) 
+\Av_i(\Rvm,t) \langle \nabla_j \Phi_\Rvm|\nabla_j\Phi_\Rvm\rangle_\rvm ] \\
&+ \sum\limits_{j=1}^{N_n} \frac{1}{ M_j} \left[ \frac{1}{2} \frac{\nabla_j |\chi(\Rvm,t)|^2}{ |\chi(\Rvm,t)|^2} +i   \nabla_j S(\Rvm,t) \right] \cdot [i \langle \Phi_\Rvm|\nabla_i \nabla_j |\Phi_\Rvm\rangle_\rvm+  \nabla_i \Av_j(\Rvm,t) +i \Av_j(\Rvm,t) \Av_i(\Rvm,t)]  ,
\end{split}
\end{equation}
or
\begin{equation}
\begin{split}
&
    \langle \Phi_\Rvm|\nabla_i \dot{\Phi}_\Rvm\rangle_\rvm=-i \langle \Phi_\Rvm |\hat{H}^{BO} \nabla_i\Phi_\Rvm\rangle_\rvm
  - \langle \Phi_\Rvm|  \hat{H}^{BO}|\Phi_\Rvm\rangle_\rvm \Av_i(\Rvm,t)
   +i \langle \nabla_i \Phi_\Rvm|  \hat{H}^{BO}|\Phi_\Rvm\rangle_\rvm 
    +i \langle \Phi_\Rvm|  \hat{H}^{BO}|\nabla_i \Phi_\Rvm\rangle_\rvm  
 \\
&+i \langle \Phi_\Rvm |\dot{\Phi}_\Rvm\rangle_\rvm \Av_i(\Rvm,t) + \nabla_i \langle \Phi_\Rvm |\dot{\Phi}_\Rvm\rangle_\rvm
-  \sum\limits_{j=1}^{N_n} \frac{1}{2 M_j} [  i\langle \nabla_i \Phi_\Rvm|\nabla^2_j|\Phi_\Rvm\rangle_\rvm - i \Av_i(\Rvm,t) \nabla_j\Av_j(\Rvm,t) 
+\Av_i(\Rvm,t) \langle \nabla_j \Phi_\Rvm|\nabla_j\Phi_\Rvm\rangle_\rvm ] \\
&+ \sum\limits_{j=1}^{N_n} \frac{1}{ M_j} \left[ \frac{1}{2} \frac{\nabla_j |\chi(\Rvm,t)|^2}{ |\chi(\Rvm,t)|^2} +i   \nabla_jS(\Rvm,t) \right] \cdot [- \nabla_j \Av_i(\Rvm,t)-i \langle \nabla_j \Phi_\Rvm|\nabla_i  \Phi_\Rvm\rangle_\rvm+  \nabla_i \Av_j(\Rvm,t) +i \Av_j(\Rvm,t) \Av_i(\Rvm,t)]  ,
\end{split}
\end{equation}
or
\begin{equation}
\begin{split}
&
    \langle \Phi_\Rvm|\nabla_i \dot{\Phi}_\Rvm\rangle_\rvm=-i \langle \Phi_\Rvm |\hat{H}^{BO} \nabla_i\Phi_\Rvm\rangle_\rvm 
  - \langle \Phi_\Rvm|  \hat{H}^{BO}|\Phi_\Rvm\rangle_\rvm \Av_i(\Rvm,t)
   +i \langle \nabla_i \Phi_\Rvm|  \hat{H}^{BO}|\Phi_\Rvm\rangle_\rvm 
    +i \langle \Phi_\Rvm|  \hat{H}^{BO}|\nabla_i \Phi_\Rvm\rangle_\rvm  
 \\
&+i \langle \Phi_\Rvm |\dot{\Phi}_\Rvm\rangle_\rvm \Av_i(\Rvm,t) + \nabla_i \langle \Phi_\Rvm |\dot{\Phi}_\Rvm\rangle_\rvm
-  \sum\limits_{j=1}^{N_n} \frac{1}{2 M_j} [  i\langle \nabla_i \Phi_\Rvm|\nabla^2_j|\Phi_\Rvm\rangle_\rvm - i \Av_i(\Rvm,t) \nabla_j\Av_j(\Rvm,t) 
+\Av_i(\Rvm,t) \langle \nabla_j \Phi_\Rvm|\nabla_j\Phi_\Rvm\rangle_\rvm ] \\
&+ i \sum\limits_{j=1}^{N_n} \frac{1}{ M_j}   \nabla_j  S(\Rvm,t)  \cdot [- \nabla_j \Av_i(\Rvm,t)-i \langle \nabla_j \Phi_\Rvm|\nabla_i  \Phi_\Rvm\rangle_\rvm+  \nabla_i \Av_j(\Rvm,t) + i \Av_j(\Rvm,t) \Av_i(\Rvm,t)] \\
&+ \sum\limits_{j=1}^{N_n} \frac{1}{ 2 M_j}  \frac{\nabla_j |\chi(\Rvm,t)|^2}{ |\chi(\Rvm,t)|^2} \cdot [- \nabla_j \Av_i(\Rvm,t)-i \langle \nabla_j \Phi_\Rvm|\nabla_i  \Phi_\Rvm\rangle_\rvm+  \nabla_i \Av_j(\Rvm,t) +i \Av_j(\Rvm,t) \Av_i(\Rvm,t)] .
\end{split}
\end{equation}
Then
\begin{equation}
\begin{split}
&
  \Im  \langle \Phi_\Rvm|\nabla_i \dot{\Phi}_\Rvm\rangle_\rvm=-\Re \langle \Phi_\Rvm |\hat{H}^{BO} \nabla_i\Phi_\Rvm\rangle_\rvm 
   +2 \Re \langle \nabla_i \Phi_\Rvm|  \hat{H}^{BO}|\Phi_\Rvm\rangle_\rvm 
 + \Im \nabla_i \langle \Phi_\Rvm |\dot{\Phi}_\Rvm\rangle_\rvm\\
&-  \Re \sum\limits_{j=1}^{N_n} \frac{1}{2 M_j} [  \langle \nabla_i \Phi_\Rvm|\nabla^2_j|\Phi_\Rvm\rangle_\rvm -  \Av_i(\Rvm,t) \nabla_j\Av_j(\Rvm,t) 
] \\
&+ \Re \sum\limits_{j=1}^{N_n} \frac{1}{ M_j}   \nabla_j  S(\Rvm,t)  \cdot [- \nabla_j \Av_i(\Rvm,t)-i \langle \nabla_j \Phi_\Rvm|\nabla_i  \Phi_\Rvm\rangle_\rvm+  \nabla_i \Av_j(\Rvm,t) ] \\
&+\Re  \sum\limits_{j=1}^{N_n} \frac{1}{ 2 M_j}  \frac{\nabla_j |\chi(\Rvm,t)|^2}{ |\chi(\Rvm,t)|^2} \cdot [- \langle \nabla_j \Phi_\Rvm|\nabla_i  \Phi_\Rvm\rangle_\rvm+ \Av_j(\Rvm,t) \Av_i(\Rvm,t)] .
\end{split}
\end{equation}
We integrate
\begin{equation}
\begin{split}
&
  \int d\Rvm |\chi(\Rvm,t)|^2  \Im  \langle \Phi_\Rvm|\nabla_i \dot{\Phi}_\Rvm\rangle_\rvm=  \int d\Rvm |\chi(\Rvm,t)|^2 \left\{-\Re \langle \Phi_\Rvm |\hat{H}^{BO} \nabla_i\Phi_\Rvm\rangle_\rvm 
   +2 \Re \langle \nabla_i \Phi_\Rvm|  \hat{H}^{BO}|\Phi_\Rvm\rangle_\rvm 
 + \Im \nabla_i \langle \Phi_\Rvm |\dot{\Phi}_\Rvm\rangle_\rvm \right. \\
&-  \Re \sum\limits_{j=1}^{N_n} \frac{1}{2 M_j} [  \langle \nabla_i \Phi_\Rvm|\nabla^2_j|\Phi_\Rvm\rangle_\rvm -  \Av_i(\Rvm,t) \nabla_j\Av_j(\Rvm,t) 
] \\
&\left. + \Re \sum\limits_{j=1}^{N_n} \frac{1}{ M_j}   \nabla_j  S(\Rvm,t)  \cdot [- \nabla_j \Av_i(\Rvm,t)-i \langle \nabla_j \Phi_\Rvm|\nabla_i  \Phi_\Rvm\rangle_\rvm+  \nabla_i \Av_j(\Rvm,t) ]\right\} \\
&+\Re  \int d \Rvm \sum\limits_{j=1}^{N_n} \frac{1}{ 2 M_j}  \nabla_j |\chi(\Rvm,t)|^2 \cdot [- \langle \nabla_j \Phi_\Rvm|\nabla_i  \Phi_\Rvm\rangle_\rvm+ \Av_j(\Rvm,t) \Av_i(\Rvm,t)] ,
\end{split}
\end{equation}
and, after simplification,
\begin{equation}
\begin{split}
&
  \Im  \int d\Rvm |\chi(\Rvm,t)|^2 \langle \Phi_\Rvm|\nabla_i \dot{\Phi}_\Rvm\rangle_\rvm= \int d\Rvm |\chi(\Rvm,t)|^2 \left\{
    \Re \langle \nabla_i \Phi_\Rvm|  \hat{H}^{BO}|\Phi_\Rvm\rangle_\rvm 
 + \Im \nabla_i \langle \Phi_\Rvm |\dot{\Phi}_\Rvm\rangle_\rvm \right. 
+  \Re \sum\limits_{j=1}^{N_n} \frac{1}{2 M_j}  
\langle \nabla_i \nabla_j \Phi_\Rvm|\nabla_j\Phi_\Rvm\rangle_\rvm 
 \\
&\left. + \Re \sum\limits_{j=1}^{N_n} \frac{1}{ M_j}   \nabla_j  S(\Rvm,t)  \cdot [- \nabla_j \Av_i(\Rvm,t)-i \langle \nabla_j \Phi_\Rvm|\nabla_i  \Phi_\Rvm\rangle_\rvm+  \nabla_i \Av_j(\Rvm,t) -\Re \sum\limits_{j=1}^{N_n} \frac{1}{ 2 M_j}     
\Av_j(\Rvm,t) \nabla_j \Av_i(\Rvm,t) ]\right\} .
\end{split}
\end{equation}
Then
\begin{equation}
\begin{split}
&\int \! d\Rvm |\chi(\Rvm,t)|^2  \! \left[ \! -\nabla_i \epsilon(\Rvm,t) \! + \! \frac{\pa \Av_i(\Rvm,t)}{\pa t}\right] \! = \!
\int \! d\Rvm |\chi(\Rvm,t)|^2\left\{- \langle \Phi_\Rvm|\hat{H}^{BO}|\nabla_i  \Phi_\Rvm\rangle_\rvm \! - \! \langle\nabla_i \Phi_\Rvm|\hat{H}^{BO}| \Phi_\Rvm\rangle_\rvm \! - \! \langle \Phi_\Rvm|[\nabla_i\hat{H}^{BO}]|  \Phi_\Rvm\rangle_\rvm  \right. \\
& \left. - \sum\limits_{j=1}^{N_n} \left[ \frac{1}{ M_j} \Re \langle \nabla_i\nabla_j\Phi_\Rvm|\nabla_j \Phi_\Rvm\rangle_\rvm 
-\frac{\nabla_i \Av_j(\Rvm,t)\cdot \Av_j(\Rvm,t)}{ M_j} \right] 
 +2 \Im  \langle \Phi_\Rvm|  \nabla_i \dot{\Phi}_\Rvm\rangle_\rvm -2 \Im \nabla_i \langle \Phi_\Rvm| \dot{\Phi}_\Rvm\rangle_\rvm
 \right\}\\
&= \int d\Rvm |\chi(\Rvm,t)|^2\left\{- \langle \Phi_\Rvm|\hat{H}^{BO}|\nabla_i  \Phi_\Rvm\rangle_\rvm - \langle\nabla_i \Phi_\Rvm|\hat{H}^{BO}| \Phi_\Rvm\rangle_\rvm - \langle \Phi_\Rvm|[\nabla_i\hat{H}^{BO}]|  \Phi_\Rvm\rangle_\rvm \right. \\
& \left. - \sum\limits_{j=1}^{N_n} \left[ \frac{1}{ M_j} \Re \langle \nabla_i\nabla_j\Phi_\Rvm|\nabla_j \Phi_\Rvm\rangle_\rvm 
-\frac{\nabla_i \Av_j(\Rvm,t)\cdot \Av_j(\Rvm,t)}{ M_j} \right] 
  -2 \Im \nabla_i \langle \Phi_\Rvm| \dot{\Phi}_\Rvm\rangle_\rvm
 \right\}\\
&+2 \int d\Rvm |\chi(\Rvm,t)|^2 \left\{
    \Re \langle \nabla_i \Phi_\Rvm|  \hat{H}^{BO}|\Phi_\Rvm\rangle_\rvm 
 + \Im \nabla_i \langle \Phi_\Rvm |\dot{\Phi}_\Rvm\rangle_\rvm \right. 
+  \Re \sum\limits_{j=1}^{N_n} \frac{1}{2 M_j}  
\langle \nabla_i \nabla_j \Phi_\Rvm|\nabla_j\Phi_\Rvm\rangle_\rvm 
 \\
&\left. + \Re \sum\limits_{j=1}^{N_n} \frac{1}{ M_j}   \nabla_j  S(\Rvm,t)  \cdot [- \nabla_j \Av_i(\Rvm,t)-i \langle \nabla_j \Phi_\Rvm|\nabla_i  \Phi_\Rvm\rangle_\rvm+  \nabla_i \Av_j(\Rvm,t) -\Re \sum\limits_{j=1}^{N_n} \frac{1}{ 2 M_j}     
\Av_j(\Rvm,t) \nabla_j \Av_i(\Rvm,t) ]\right\},
\end{split}
\end{equation}
which, after extensive simplifications, can be written as
\begin{equation}
\begin{split}
&\int d\Rvm |\chi(\Rvm,t)|^2  \left[-\nabla_i \epsilon(\Rvm,t)+\frac{\pa \Av_i(\Rvm,t)}{\pa t}\right]  
= \int d\Rvm |\chi(\Rvm,t)|^2\left\{ - \langle \Phi_\Rvm|[\nabla_i\hat{H}^{BO}]|  \Phi_\Rvm\rangle_\rvm \right. \\
& \left. + \sum\limits_{j=1}^{N_n} 
\frac{\Av_j(\Rvm,t) [\nabla_i \Av_j(\Rvm,t) -\nabla_j \Av_i(\Rvm,t)]}{ M_j} 
 + \ \sum\limits_{j=1}^{N_n} \frac{2}{ M_j}   \nabla_j  S(\Rvm,t)  \cdot [\nabla_i \Av_j(\Rvm,t)- \nabla_j \Av_i(\Rvm,t)+\Im \langle \nabla_j \Phi_\Rvm|\nabla_i  \Phi_\Rvm\rangle_\rvm ]\right\}.
\end{split}
\end{equation}

\end{widetext}

Taking the classical limit in the last equation and  using  Lemmas \ref{LL} and \ref{LLL}, we arrive at Eq.~\eqref{Llast}.

\end{document}